\documentclass[11pt,a4paper]{article}

\usepackage{amsthm,amssymb,amsmath,amsfonts}
\usepackage{thmtools}
\usepackage{thm-restate}
\usepackage{tabularx}
\usepackage[square,numbers]{natbib}
\usepackage[
    colorlinks=true,
    linkcolor=black,
    citecolor=black,
    urlcolor=black
]{hyperref}
\usepackage{pgfplots} 
\pgfplotsset{compat=1.18}

\newtheorem{lemma}{Lemma}

\newtheorem{claim}{Claim}

\newcommand{\problemdef}[3]{
	\begin{center}
	\begin{minipage}{0.95\columnwidth}
		\noindent
		\textsc{#1}
		\vspace{5pt}\\
		\setlength{\tabcolsep}{3pt}
		\begin{tabularx}{\textwidth}{@{}lX@{}}
			\textbf{Input:}     & #2 \\
			\textbf{Question:}  & #3
		\end{tabularx}
	\end{minipage}
	\end{center}
}

\title{Integer Quadratic Programming is W[1]-Hard Parameterized by the Number of Variables}

\author{Anton Herrmann \\ Technische Universität Berlin \\\texttt{a.herrmann@tu-berlin.de}}

\begin{document}

\maketitle

\begin{abstract}
We show that \textsc{Integer Quadratic Programming} is W[1]-hard parameterized by the number of variables.
Thus, under standard complexity assumptions, \textsc{Integer Quadratic Programming} cannot be solved in $f(n) \cdot |\mathcal{I}|^{O(1)}$ time for any computable function~$f$ where $|\mathcal{I}|$ is the size of the encoding and $n$ is the number of variables.
\end{abstract}

\paragraph*{AI Disclosure:} The construction was found by GPT-5.6 Sol.
The author is responsible for the write-up and the correctness.

 
\section{Introduction}
In \textsc{Linear Programming (LP)} the goal is the optimization of a linear objective function subject to linear constraints.
It is a well-known result in mathematical optimization that this can be done in polynomial time \cite{KHACHIYAN198053}.
However, if the problem is restricted to integer values, then it becomes NP-hard \cite{Karp72}.
Formally, this restricted version is called \textsc{Integer Linear Programming (ILP)} and can be formalized as follows:
\begin{align*}
    \min \ & c^Tx \\
    \text{s.t. } & Ax \leq b \\
    & x \in \mathbb Z^n,
\end{align*}
where $A$ is an $m \times n$ integer matrix and $c,b$ are integer vectors of size $n$ and $m$ respectively.
Here, $c^Tx$ is the objective function and $Ax \leq b$ are the linear constraints.

Despite its NP-hardness, \textsc{Integer Linear Programming} is a very useful tool for parameterized algorithms.
In parameterized complexity, a problem is called fixed-parameter tractable with respect to some parameter $k$ if it can be solved in~$f(k)\cdot |\mathcal{ I}|^{O(1)}$ time for some computable function~$f$ where $|\mathcal{I}|$ is the encoding size of the input.
By a famous result of Lenstra \cite{Lenstra83}, \textsc{Integer Linear Programming} is fixed-parameter tractable with respect to the number of variables asa parameter.
Consequently, it is possible to show that some problem $A$ is fixed-parameter tractable with respect to some parameter $k$, by modeling the problem as an ILP in which the number of variables is bounded by a function of $k$.
This approach has been extensively used in the area of parameterized algorithms (see for example \cite{BoehmerK24,BredereckFNT21, CyganFKLMPPS15,GavenciakKK22, GimaHKKO22, KnopKMT19}).

Besides the application of Lenstra's algorithm, there has been a lot of research on improving the running time and obtaining similar results for generalizations of \textsc{Integer Linear Programming} \cite{Dadush12, FrankT87, HEINZ2005543, HildebrandK13, Kannan87, ReisR23}.
We refer to \cite{GavenciakKK22} for a recent overview of these results and their applications in parameterized complexity.
\newline

In this work, we consider the more general problem \textsc{Integer Quadratic Programming (IQP)}.
Here, the goal is to optimize a quadratic objective function subject to linear constraints.
Formally, the problem is defined as follows:
\begin{align*}
    \min \ & x^TQx + c^Tx \\
    \text{s.t. } & Ax \leq b, \\
    & x \in \mathbb Z^n,
\end{align*}
where $Q$ is a symmetric $n \times n$ integer matrix, $A$ is an $m \times n$ integer matrix and $c, b$ are integer vectors of size $n$ and $m$ respectively.
In the decision version of the problem we are given an additional integer $d$ and ask whether there exists an integer vector $x$ satisfying
\begin{align*}
    x^TQx + c^Tx &\leq d,\\
    Ax &\leq b.
\end{align*}
Lokshtanov \cite{Lokshtanov15} and Zemmer \cite{zemmer2017integer} showed that \textsc{Integer Quadratic Programming} is solvable in polynomial time if the number of variables plus the largest absolute value $\alpha$ in the matrices $Q$ and $A$ are fixed.
In particular, Lokshtanov's algorithm runs explicitly in $f(n + \alpha) \cdot |\mathcal{I}|^{O(1)}$ time.
Similarly to Lenstra's algorithm for ILP, this can be used as a tool to show fixed-parameter tractability of other problems by modeling them as IQPs~\cite{GavenciakKK22,Hanaka23,HlinenyS19,Lokshtanov15}.
Recently, Ari and Hildebrand \cite{ari2026} improved the dependency on $n +\alpha$ and applied their result to obtain improved parameterized algorithms for some graph problems.
Further work on the parameterized complexity of IQP was done by Eiben et al. \cite{EibenGKO19} who studied the problem under explicit restrictions on the domain or coefficients and under structural restrictions on variable interactions.
\newline

In this paper we negatively answer the question from Lokshtanov \cite{Lokshtanov15}, whether \textsc{Integer Quadratic Programming} is fixed-parameter tractable parameterized by the number of variables alone:
We show that IQP cannot be solved in $f(n) \cdot |\mathcal{I}|^{O(1)}$ time under standard complexity assumptions.

\paragraph*{Notation.}
For two integers $a\leq b$ we denote $[a,b] = \{a, a+ 1, \dots, b\}$ and $[a] = [1,a]$.
For a graph $G$, we denote by $V(G)$ the set of vertices and by $E(G)$ the set of edges.
An \emph{independent set} in $G$ is a set $S \subseteq V(G)$ of vertices such that $\{u,v\} \notin E(G)$ for all $u,v \in S$.

\paragraph{Parameterized Complexity.}
We use the definitions as stated in the book by~\citet{CyganFKLMPPS15} and refer to it for further details.
A \emph{parameterized problem} is a language $L \subseteq \Sigma^* \times \mathbb{N}$, where $\Sigma$ is a fixed, finite alphabet.
For an instance $(x,k) \in \Sigma^* \times \mathbb{N}$, the number $k$ is called the \emph{parameter}.
A parameterized problem is fixed-parameter tractable if it can be solved in~$f(k)\cdot |x|^{O(1)}$ time for some computable function~$f$.
If a parameterized problem is~$W[1]$-hard, then it is considered unlikely to be fixed-parameter tractable.
Let $A,B\subseteq \Sigma^* \times \mathbb{N}$ be two parameterized problems.
A \emph{parameterized reduction} from $A$ to $B$ is an algorithm that, given an instance $(x,k)$ of $A$, outputs an instance $(x',k')$ of $B$ such that
\begin{itemize}
    \item $(x,k)$ is a yes-instance of $A$ if and only if $(x',k')$ is a yes-instance of $B$,
    \item $k' \leq g(k)$ for some computable function $g$, and
    \item the running time is $f(k) \cdot |x|^{O(1)}$ for some computable function $f$.
\end{itemize}
In particular, if $A$ is W[1]-hard and there exists a parameterized reduction to $B$, then $B$ is also W[1]-hard.

\section{Hardness Proof}

We provide a parameterized reduction from \textsc{Independent Set (IS)} to IQP, where the former problem is defined as follows:

\problemdef{Independent Set}
{A graph $G$ and an integer~$k\in \mathbb N$.}
{Is there an independent set of size $k$ in $G$?}

\textsc{Independent Set (IS)} is known to be W[1]-hard parameterized by the solution size $k$ \cite{CyganFKLMPPS15}.
Given an IS instance $(G,k)$, we now construct in polynomial time an IQP instance with $2k$ variables which attains a minimum value of $\leq 0$ if and only if the graph $G$ contains an independent set of size~$k$.
This implies the main theorem.

\begin{restatable}{theorem}{mainthm}
    \label{thm}
    IQP is W[1]-hard parameterized by the number of variables.
\end{restatable}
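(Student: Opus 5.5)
The plan is to give a polynomial-time parameterized reduction from \textsc{Independent Set}, as announced before Theorem~\ref{thm}. Given $(G,k)$ with $V(G)=[N]$, we build an IQP with the $2k$ variables $x_1,\dots,x_k,y_1,\dots,y_k$ and ask whether its minimum is $\le 0$. The intended meaning is that $x_i\in[1,N]$ names the $i$-th chosen vertex and $y_i$ is a ``lifting'' of $x_i$ that equals $x_i^2$ in every solution of value $\le 0$.

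\emph{Step 1 (lifting to a parabola).} Add the box constraints $1\le x_i\le N$, and for every $a\in[N]$ add the tangent inequality $y_i\ge 2a x_i-a^2$. There are polynomially many of these. For integral $x_i$, the tangent at $a=x_i$ gives $y_i\ge x_i^2$. The objective will contain the nonnegative terms $y_i-x_i^2$, which have diagonal entries $-1$ in $Q$. So a value $\le 0$ forces $y_i=x_i^2$, provided all other terms of the objective are also nonnegative. After this step each pair $(x_i,y_i)$ lies on the moment curve, so every point is a vertex of the convex hull of the others. Consequently any subset of the allowed labels can be cut out by linear inequalities in $(x_i,y_i)$ alone.

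\emph{Step 2 (distinctness and ordering).} Impose $x_1+1\le x_2,\ \dots,\ x_{k-1}+1\le x_k$. These are linear and force $k$ distinct vertices.

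\emph{Step 3 (edges; the main obstacle).} We must forbid $\{x_i,x_j\}\in E(G)$ for every $i<j$. This cannot be done with linear constraints in $(x_i,y_i,x_j,y_j)$. The reason is that $(u,v)$ and $(u',v')$ have the same midpoint as $(u,v')$ and $(u',v)$. So the extra power has to come from the cross terms $x_ix_j$, $x_iy_j$, $y_iy_j$ that $Q$ allows. The attempt I would make first is to relabel the vertices by a polynomially bounded injection $\varphi\colon V(G)\to\mathbb Z$ and let $x_i$ range over $\varphi(V(G))$; membership in this image is enforced again by Step~1's convex-position trick. The labels are chosen Sidon-like, so that a suitable quadratic expression in the lifted coordinates separates edges from non-edges. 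Concretely, I would look for a polynomial in $x_i,y_i,x_j,y_j$ that is quadratic in those variables (degree at most $4$ in $x_i,x_j$). It should vanish exactly on the $(\varphi(u),\varphi(v))$ with $uv\notin E$ and be positive on edges. We then add the $\binom k2$ such terms to the objective.

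If a single polynomial per pair is too weak, the fallback is to replace $x_i$ by a mixed-radix code. In that scheme, one linear combination of the lifted coordinates identifies the ordered pair $(u,v)$, and a convex-position argument on a one-dimensional parabola then deletes exactly the forbidden pairs. I expect this step to be where the real difficulty lies.

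\emph{Step 4 (correctness and size).} For the forward direction, an independent set $v_1<\dots<v_k$ gives the assignment $x_i=\varphi(v_i)$, $y_i=x_i^2$, which has objective value $0$. For the converse, since every term is nonnegative on the feasible integer points, value $\le 0$ forces each term to be zero. This yields $k$ distinct, pairwise nonadjacent vertices. All coefficients are polynomial in $N$, the number of constraints is polynomial, and the number of variables is $2k$. Hence this is a parameterized reduction, and Theorem~\ref{thm} follows from the W[1]-hardness of \textsc{Independent Set}.
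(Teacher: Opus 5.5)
Your Steps~1, 2 and~4 are sound. The tangents $y_i\ge 2ax_i-a^2$ work as well as the paper's secants, and the ordering $x_i+1\le x_{i+1}$ gives distinctness more simply than the paper's Constraint~(\ref{con:identity}). But Step~3, which carries all of the combinatorial content, is never carried out, so as written there is no reduction. Moreover, the claim that sends you on this detour is false: edges \emph{can} be forbidden by linear constraints in $(x_i,y_i,x_j,y_j)$. Your midpoint observation only shows that a polyhedron containing the lifted points of $(u,v')$ and $(u',v)$ also contains their common midpoint. It does not force the lifted points of $(u,v)$ or $(u',v')$ into the polyhedron. The midpoint itself has $y_i$-coordinate $(u^2+u'^2)/2>\bigl((u+u')/2\bigr)^2$, so it lies strictly above the parabola and can never have objective value $\le 0$.

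The missing idea is your own convex-position remark from Step~1, applied to pairs. On points with $y_i=x_i^2$ and $y_j=x_j^2$, the linear form $2ux_i-y_i+2vx_j-y_j$ equals $u^2+v^2-(x_i-u)^2-(x_j-v)^2$, so it is uniquely maximized at $(x_i,x_j)=(u,v)$. Hence impose $2ux_i-y_i+2vx_j-y_j\le u^2+v^2-1$ for all $i\neq j$ and every ordered pair $(u,v)$ with $\{u,v\}\in E(G)$. By integrality, this deletes exactly the forbidden pair and keeps every other lifted point. This is the paper's Constraint~(\ref{con:edge}); with it the objective stays $\sum_i(y_i-x_i^2)$ and your Step~4 goes through unchanged.

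The alternatives you sketch would not fill the gap as stated. A single quadratic term $p$ per pair, evaluated at lifted points, is a polynomial in $(x_i,x_j)$ of degree at most $4$ in $x_j$ for fixed $x_i$. Take $G$ a perfect matching on $8$ vertices and $u$ its vertex of smallest label. The term must be positive at $x_j=\varphi(\text{partner of }u)$, so $p(\varphi(u),\cdot)$ is a nonzero polynomial of degree at most $4$. Yet it must vanish at the labels of the $6$ non-neighbours of $u$, all larger than $\varphi(u)$, so your requirement cannot be met for any choice of $\varphi$. Your Step~4 would also need these terms to be nonnegative at feasible integer points off the parabola, which you do not address.

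The mixed-radix fallback needs the square of a code such as $Nx_i+x_j$. That square contains the cross term $x_ix_j$ and is not linear in the lifted coordinates. It could be repaired with an auxiliary variable $z_{ij}$ per pair, tangent constraints forcing $z_{ij}\ge (Nx_i+x_j)^2$, objective terms $z_{ij}-(Nx_i+x_j)^2$, and tangent cuts at the codes of edges. That gives a valid reduction with $O(k^2)$ variables, but none of it is in your proposal, and the paper's constraint shows no extra variables are needed.
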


From now on, we assume that the vertices of $G$ are given by the numbers $\{1, \dots, n\}$.
For each $i \in [k]$ we introduce two variables $x_i$ and $y_i$.
Intuitively, the values of $x_i$ and $y_i$ determine the $i$-th vertex in the independent set.
To be precise, $(x_i, y_i) = (t,t^2)$ corresponds to $t \in [n]$ being selected as the $i$-th vertex of the independent set.

The objective function of our IQP is defined as follows:
\begin{align*}
    f(x_1, \dots, x_k, y_1, \dots, y_k) = \sum_{i=1}^k (y_i - x_i^2).
\end{align*}
Clearly, the function attains value zero if each pair of variables $(x_i,y_i)$ corresponds to a correct choice of a vertex, that is, $(x_i,y_i) = (t,t^2)$ for some $t \in [n]$.
As a first step we introduce constraints that ensure $y_i \geq x_i^2$ for all $i \in [k]$ and afterwards show that the value zero is attained if and only if the values of the variables correspond to a selection of vertices.

We remark that the objective function $f$ is separable concave and therefore our hardness result even holds when restricted to separable concave functions.
Note, that this does not contradict the polynomial-time solvability for a constant number of variables in this case (which is mentioned by Del Pia \cite{Pia19}).

\paragraph{Choice Constraints.}

For each pair of variables~$(x_i,y_i), i \in [k]$, we introduce the following constraints:
\begin{align}
    1 &\leq x_i \leq n, \label{con:x_bounds} \\
    y_i &\geq (2r + 1)x_i - r(r+1), \hspace{0.5cm}  \text{for each } r \in [1,n]. \label{con:y_lower}
\end{align}

For each $r \in [n]$ we denote the function on the right-hand side of Constraint (\ref{con:y_lower}) by $h_r(x_i) := (2r + 1)x_i - r(r+1)$.
Intuitively, the linear function $h_r$ passes through the points $(r,r^2)$ and $(r+1, (r+1)^2)$ and therefore ensures that $y_i$ is at least $x_i^2$.
For a sketch see Figure \ref{fig:cons_y_value}.
\begin{figure}
    \centering
    \begin{tikzpicture}
        \begin{axis}[
            axis lines=middle,
            xlabel={$x$},
            ylabel={$y$},
            xmin=0, xmax=3.5,
            ymin=0, ymax=10,
            samples=100,
            width=9cm,
            height=7cm,
            xtick={1,2,3},
            ytick={1,4,9},
            clip=false
        ]
        
        \addplot[thick, domain=0:3.2] {x^2};
        
        \addplot[very thick, dashed, blue, domain=0.7:2.4] {3*x-2};
        
        \addplot[very thick, dashed, red, domain=1.7:3.3] {5*x-6};
        
        \addplot[
            only marks,
            mark=*,
            mark size=2pt
        ] coordinates {
            (1,1)
            (2,4)
            (3,9)
        };
        
        \node[below right] at (axis cs:1,1) {$(1,1)$};
        \node[above left] at (axis cs:2,4) {$(2,4)$};
        \node[above left] at (axis cs:3,9) {$(3,9)$};
        
        \end{axis}
    \end{tikzpicture}
    \caption{A sketch of the Constraint (\ref{con:y_lower}) with \textcolor{blue}{$r=1$} and \textcolor{red}{$r=2$}.
    These secants of the standard parabola $x^2$ guarantee $y_i \geq x_i^2$ for all $i \in [k]$.}
    \label{fig:cons_y_value}
\end{figure}
Formally this is proven in the next claim.

\begin{claim}
    \label{claim:choice_forward}
    Constraints (\ref{con:x_bounds}) and (\ref{con:y_lower}) ensure $y_i \geq x_i^2$.
\end{claim}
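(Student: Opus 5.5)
The plan is to exploit integrality of $x_i$ and pick the single secant constraint that is tight at the current value of $x_i$. Since $x_i \in \mathbb Z$ and Constraint (\ref{con:x_bounds}) gives $1 \leq x_i \leq n$, we may set $r := x_i$. This choice lies in $[1,n]$, so it is one of the indices for which Constraint (\ref{con:y_lower}) was imposed. That constraint then yields
\begin{align*}
y_i \geq h_{x_i}(x_i) = (2x_i+1)x_i - x_i(x_i+1) = 2x_i^2 + x_i - x_i^2 - x_i = x_i^2,
\end{align*}
which is exactly the claim.

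To make the underlying geometry transparent, I would also record the identity $h_r(x) = x^2 - (x-r)(x-r-1)$. It follows by expanding $(x-r)(x-r-1) = x^2 - (2r+1)x + r(r+1)$. The identity shows that $h_r$ agrees with $x^2$ exactly at $x \in \{r, r+1\}$. It also shows that $h_r(x) \leq x^2$ at every other integer, because the product of two consecutive integers is nonnegative.

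This second observation is not needed for the claim itself. It will be useful later, when showing that each constraint is not too strong, i.e.\ that $(t,t^2)$ is feasible for all $t\in[n]$.

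There is no real obstacle here. The only point requiring care is that the argument relies on $x_i$ being an integer, so that $r = x_i$ is an admissible index. For fractional $x_i$ the inequality $y_i \geq x_i^2$ can fail, since the secants lie above the parabola between consecutive integers. This is harmless, because IQP restricts all variables to $\mathbb Z$.
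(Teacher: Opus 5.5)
Your argument is correct and is exactly the paper's proof: take $r = x_i \in [1,n]$ in Constraint (\ref{con:y_lower}) and compute $h_{x_i}(x_i) = x_i^2$. One correction to your closing aside: integrality is not actually needed for this claim. For real $x_i \in [r, r+1]$ you have $h_r(x_i) - x_i^2 = -(x_i - r)(x_i - r - 1) \geq 0$, so choosing $r = \lfloor x_i \rfloor$ gives $y_i \geq x_i^2$ for fractional $x_i \in [1,n]$ as well; the fact that the secants lie above the parabola between consecutive integers is precisely why the inequality cannot fail.
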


\begin{proof}
    By Constraint (\ref{con:x_bounds}) $x_i$ takes a value~$t$ between $1$ and $n$.
    Subsequently, Constraint (\ref{con:y_lower}) with $r = t = x_i$ yields:
    \begin{align*}
        y_i \geq h_{r}(x_i) = h_t(t) = (2t + 1)t - t(t + 1) = t^2 = x_i^2.
    \end{align*}
\end{proof}

It follows that for each $i \in [k]$ the summand $y_i - x_i^2$ in the objective function $f$ is nonnegative.
Thus, $f$ attains the value zero if and only if for all~$i\in [k]$ we have $(x_i, y_i) = (t,t^2)$ for some~$t \in [n]$.

Conversely, we now prove that the Constraints (\ref{con:x_bounds}) and (\ref{con:y_lower}) are satisfied for $i \in [k]$ if $(x_i, y_i) = (t,t^2)$ for some~$t \in [n]$.

\begin{claim}
    \label{claim:choice_backward}
    Let $(x_i, y_i) = (t,t^2)$ for some $t \in [n]$.
    Then Constraints (\ref{con:x_bounds}) and (\ref{con:y_lower}) are satisfied.
\end{claim}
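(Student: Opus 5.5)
The plan is to verify the two constraint families directly for the point $(x_i,y_i)=(t,t^2)$ with $t\in[n]$. Constraint (\ref{con:x_bounds}) holds immediately, since $t\in[n]$ means $1\leq t\leq n$.

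For Constraint (\ref{con:y_lower}), I fix an arbitrary $r\in[1,n]$ and show $t^2\geq h_r(t)$. To do this, I compute the difference and try to write it as a product of integer factors:
\begin{align*}
t^2 - h_r(t) = t^2 - (2r+1)t + r(r+1) = (t-r)(t-r-1).
\end{align*}
This matches the picture in Figure~\ref{fig:cons_y_value}: the secant $h_r$ meets the parabola exactly at $t=r$ and $t=r+1$, and lies above it only strictly between these two points.

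The one step needing care is the sign of the product $(t-r)(t-r-1)$ for integer $t$. Since $t$ and $r$ are integers, no integer lies strictly between $r$ and $r+1$, so I split into two cases:
\begin{itemize}
    \item If $t\leq r$, both factors are nonpositive, so the product is nonnegative.
    \item If $t\geq r+1$, both factors are nonnegative, so the product is nonnegative.
\end{itemize}
In either case $(t-r)(t-r-1)\geq 0$, hence $y_i=t^2\geq h_r(t)$ for every $r\in[1,n]$. This establishes Constraint (\ref{con:y_lower}) and completes the proof of the claim.
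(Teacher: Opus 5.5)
Your proof is correct and follows the paper's argument exactly. Both note that Constraint (\ref{con:x_bounds}) is immediate and factor $t^2 - h_r(t) = (t-r)(t-r-1)$; your case split into $t \leq r$ and $t \geq r+1$ just spells out the paper's remark that a product of two consecutive integers is nonnegative.
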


\begin{proof}
    Constraint (\ref{con:x_bounds}) is clearly satisfied.
    For Constraint (\ref{con:y_lower}), let $r \in [1,n]$.
    Then 
    \begin{align*}
        t^2 - h_r(t) &= t^2 - \big( (2r+1)t - r(r+1) \big )\\
        & = t^2 - 2rt - t + r^2 +r = (t-r)(t-r-1),
    \end{align*}
    which is always nonnegative since the product of two consecutive integers is always nonnegative.
    Hence, $y_i = t^2 \geq h_r(t) = h_r(x_i)$ for all $r \in [1,n]$.
\end{proof}

We summarize the two claims from above in the next lemma.

\begin{lemma}
    \label{lem:value_zero}
    The minimum of the objective function $f$ subject to the Constraints (\ref{con:x_bounds}) and (\ref{con:y_lower}) is zero and is attained if and only if each pair $(x_i,y_i)$ is of the form $(t,t^2)$ with $t \in [1,n]$.
\end{lemma}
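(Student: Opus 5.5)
The lemma follows by combining Claim~\ref{claim:choice_forward} and Claim~\ref{claim:choice_backward}. The one thing to watch is that every variable ranges over $\mathbb Z$, so each $x_i$ is an integer. This integrality is what lets Claim~\ref{claim:choice_forward} pick $r = x_i$ as a legitimate index in $[1,n]$.

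\emph{Lower bound.} Take any integral point satisfying Constraints~(\ref{con:x_bounds}) and~(\ref{con:y_lower}). Claim~\ref{claim:choice_forward} gives $y_i - x_i^2 \geq 0$ for every $i \in [k]$. Summing over $i$, we get $f \geq 0$ on the feasible region.

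\emph{Attainment.} The feasible region is nonempty; for instance, take $(x_i,y_i) = (1,1)$ for all $i$. By Claim~\ref{claim:choice_backward}, any choice $(x_i,y_i) = (t_i,t_i^2)$ with $t_i \in [n]$ is feasible, and at such a point every summand of $f$ is $0$. Hence the minimum exists and equals zero. This also proves the ``if'' direction of the characterization.

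\emph{Only if.} Suppose a feasible point has $f = 0$. Since $f$ is a sum of nonnegative terms $y_i - x_i^2$, each term must vanish, so $y_i = x_i^2$ for every $i$. Constraint~(\ref{con:x_bounds}) together with integrality gives $x_i = t \in [1,n]$. Therefore $(x_i,y_i) = (t,t^2)$, as required.

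I do not expect any real obstacle. The only subtle point is the reliance on integrality of $x_i$ mentioned above. Without it the secant constraints would only give $y_i \geq x_i^2$ up to a gap between consecutive integers, and the claim that the minimum is attained only at $(t,t^2)$ would fail.
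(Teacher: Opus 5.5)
Your proof is correct and is essentially the paper's argument: Claim~\ref{claim:choice_forward} makes every summand $y_i - x_i^2$ nonnegative, Claim~\ref{claim:choice_backward} shows that every point with $(x_i,y_i)=(t_i,t_i^2)$, $t_i\in[n]$, is feasible with value zero, and $f=0$ then forces each summand to vanish. Your closing remark is mistaken, though. For real $x_i$ with $1\le x_i\le n$ and $r=\lfloor x_i\rfloor\in[n]$ one has $h_r(x_i)-x_i^2=(x_i-r)(r+1-x_i)\ge 0$, with equality only when $x_i$ is an integer, so the secant constraints still force $y_i\ge x_i^2$ (even $y_i>x_i^2$ at non-integral $x_i$) and the lemma holds verbatim over $\mathbb{R}$; integrality is used only for the particular choice $r=x_i$ in the paper's proof of Claim~\ref{claim:choice_forward}.
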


\paragraph{Independent Set Constraints.}
Next, we add constraints which force the values of $x_1, \dots, x_k$ to induce an independent set in $G$ if $y_i = x_i^2$ for all $i \in [k]$. 
In particular, we want to ensure $x_i \neq x_j$ and $\{x_i,x_j\} \notin E(G)$ for~$i \neq j$.

For every vertex $v \in [n]$ we define the linear function
\begin{align*}
    \phi_v(x,y) := 2vx - y.
\end{align*}
From a geometrical point of view, if $x$ and $y=x^2$ are fixed and $v$ is seen as the parameter, then the function $\phi_v$ is the tangent of the standard parabola through the point $(x,x^2)$.
In particular, for $t\in [n]$ we have
\begin{align*}
    \phi_v(t,t^2) := 2vt - t^2 = v^2 - (t-v)^2 \leq v^2,
\end{align*}
and equality $\phi_v(t,t^2) = v^2$ if and only if $v = t$.
With this in hand, we introduce for each pair $i\neq j \in [k]$ and each vertex $v \in [n]$ the constraint
\begin{align}
    \phi_v(x_i, y_i) + \phi_v(x_j,y_j) \leq 2v^2 - 1. \label{con:identity}
\end{align}
If $y_i = x_i^2$ and $y_j = x_j^2$, then this constraint is satisfied if and only if $x_i \neq v$ or $x_j \neq v$ as otherwise both summands would equal $v^2$ by the observation made above.

By the same idea, we add for each pair $i\neq j \in [k]$ and each pair of vertices~$(u,v) \in [n]^2$ with $\{u,v\} \in E(G)$ the constraint
\begin{align}
    \phi_u(x_i, y_i) + \phi_v(x_j,y_j) \leq v^2 + u^2 - 1. \label{con:edge}
\end{align}
If $y_i = x_i^2$ and $y_j = x_j^2$, then this constraint is satisfied if and only if $(x_i, x_j) \neq (u,v)$ as otherwise $\phi_u(x_i, y_i)$ and $\phi_v(x_j,y_j)$ would equal $u^2$ and $v^2$ respectively.

\paragraph{Deriving the Main Theorem.}
Finally, we prove the main theorem.

\mainthm*

\begin{proof}
    The constructed IQP has~$2k$ variables $x_1, \dots,x_k,y_1, \dots,y_k$ and the number of constraints is bounded by $O(k^2n^2)$.
    Moreover, the largest coefficient is of order $O(n^2)$ and therefore the reduction runs in polynomial time.
    It remains to prove that $G$ contains an independent set of size $k$ if and only if the optimal value is at most $0$.
    
    ``$\Rightarrow$'': Suppose $G$ contains an independent set $S = \{t_1, \dots, t_k\} \subseteq [n]$ of size $k$.
    We define
    \begin{align*}
        x_i := t_i \text{ and } y_i := t_i^2 \hspace{0.3cm} \text{for all } i\in [k].
    \end{align*}
    Clearly, $f(x_1, \dots, x_k, y_1, \dots, y_k) = 0$ and by Lemma \ref{lem:value_zero} we know that the Constraints (\ref{con:x_bounds}) and (\ref{con:y_lower}) are fulfilled.
    For the Constraints (\ref{con:identity}) and (\ref{con:edge}) we note that for all $i \neq j \in [k]$ we have $t_i \neq t_j$ and $\{t_i,t_j\} \notin E(G)$.
    Thus, by construction of the functions $\phi_v, v \in [n]$ the constraints are satisfied.
    
    ``$\Leftarrow$'': Suppose the IQP attains a value of $\leq0$ at~$(x_1, \dots, x_k, y_1, \dots, y_k)$.
    By Lemma \ref{lem:value_zero} we know that~$y_i = x_i^2$ and~$x_i \in [n]$ for all~$i \in [k]$.
    Now consider the corresponding set of vertices~$\{x_1, \dots, x_k\} \subseteq [n]$.
    Since~$y_i = x_i^2$ and~$y_j = x_j^2$, the Constraints (\ref{con:identity}) and (\ref{con:edge}) imply that $x_i \neq x_j$ and $\{x_i,x_j\} \notin E(G)$ for all $i\neq j \in [k]$.
    Hence, $\{x_1, \dots, x_k\}$ is an independent set of size $k$ in $G$.
\end{proof}

\section{Conclusion}
We provided a short and self-contained reduction showing that IQP is W[1]-hard with respect to the number of variables.
We conclude by recalling two open questions from the literature which are related to our result:
\begin{itemize}
    \item Is IQP fixed-parameter tractable parameterized by the number of variables plus the number of constraints? \cite{Lokshtanov15}
    \item Is IQP solvable in polynomial time if the number of variables is fixed \cite{PiaDM17}?  For two variables this is true \cite{PiaW14}.
\end{itemize}

\bibliographystyle{plainnat}
\bibliography{ref}
\end{document}